\documentclass[copyright,creativecommons]{eptcs}

\usepackage{iftex}

\ifpdf
  \usepackage{underscore}         
  \usepackage[T1]{fontenc}        
\else
  \usepackage{breakurl}           
\fi
\AtBeginDocument{}

\usepackage{wrapfig}
\usepackage{booktabs}
\usepackage{proof}
\usepackage{amsmath}
\usepackage{amsthm}
\usepackage{amssymb}
\usepackage{mathrsfs}
\usepackage{comment}    
\usepackage{algorithm}
\usepackage[noend]{algpseudocode}

\usepackage{tabularx}

\setlength{\marginparwidth}{0.45\textwidth}
\usepackage{todonotes}

\newtheorem{theorem}{Theorem}

\newcommand{\ignore}[1]{}

\newcommand{\ND}[0]{$\mathcal{N\mkern-4mu D}$ }

\title{Parallel Verification of Natural Deduction Proof Graphs}
\author{
    James T. Oswald
    \qquad\qquad Brandon Rozek
    \institute{
        Rensselaer \textit{AI} \& Reasoning (RAIR) Lab, \\
        Rensselaer Polytechnic Institute (RPI) \\
        Troy, New York, USA
    }
    \email{\{oswalj, rozekb\}@rpi.edu}
}
\newcommand{\titlerunning}{Parallel Verification of Natural Deduction Proof Graphs}
\newcommand{\authorrunning}{J. Oswald and B. Rozek}


\hypersetup{
  bookmarksnumbered,
  pdftitle    = {\titlerunning},
  pdfauthor   = {\authorrunning},
  pdfkeywords = {Natural Deduction, Formal Verification, Parallel Computing}
}

\begin{document}

\maketitle

\begin{abstract}
Graph-based interactive theorem provers offer a visual representation of proofs, explicitly representing the dependencies and inferences between each of the proof steps in a graph or hypergraph format. The number and complexity of these dependency links can determine how long it takes to verify the validity of the entire proof. Towards this end, we present a set of parallel algorithms for the formal verification of graph-based natural deduction (\ND) style proofs. We introduce a definition of layering that captures dependencies between the proof steps (nodes). Nodes in each layer can then be verified in parallel as long as prior layers have been verified. To evaluate the performance of our algorithms on proof graphs, we propose a framework for finding the performance bounds and patterns using directed acyclic network topologies (DANTs). This framework allows us to create concrete instances of DANTs for empirical evaluation of our algorithms. With this, we compare our set of parallel algorithms against a serial implementation with two experiments: one scaling the problem size and the other scaling the number of threads. Our findings show that parallelization results in improved verification performance for certain DANT instances. We also show that our algorithms scale for certain DANT instances with respect to the number of threads.
\end{abstract}

\section{Introduction}

A major role of an interactive theorem prover is to take
an existing proof and verify that it is valid with respect
to the logical calculi used. This involves iterating
over each step of a proof and verifying both that it 
syntactically matches the transformation of the formula under
the rule, and that it is valid with respect to the semantics
of the underlying proof system.
Interactive theorem provers such as Coq \cite{thecoqdevelopmentteam20193476303},
Lean \cite{moura2021lean}, and HyperSlate \cite{Bringsjord2022}
verify not only that the proof written by the user is correct,
but also every underlying proof that the given proof depends on.
This generally amounts to verifying large portions of the standard library
and other popular libraries such as mathlib \cite{Community2019TheLM}.
Our work makes a step toward speeding up the proof
verification process.
We focus on the verification of natural deduction proof graphs,
such as those represented in HyperSlate, though the ideas
from this approach could be adapted to other interactive theorem provers as well.

In order to speed up verification, we look toward parallel computing.
One naive implementation would be to verify all the proof steps in
parallel. This assumes, however, that the step has all the semantic
information needed to show validity. This is often not the case
for many logic calculi. Assumptions are introduced and discharged
in the case of natural deduction. Variables may be assigned
to constants. These issues present a constraint that in
order to parallelize verification, we need to ensure that
some steps are verified before others.
We achieve this by introducing a layering approach.
Given a definition of layering that induces a topological partial order,
every step within a layer $n$ only depends on steps within
the layers prior. 
Given these layers, we can then verify all the steps that are from
the same layer in parallel without worrying about invalidating the
underlying semantics.
To illustrate this approach, we present the parallel
verification of natural deduction proof graphs.

The underlying dependency nature of each of the steps induces
a directed acyclic hypergraphical representation where nodes hold \ND statements and hyperedges between nodes represent inference rules. 
This graphical representation not only gives us an easy
way to visualize such proofs, but also provides insight
on empirically validating our parallel algorithms. 
Inspired by computer network topologies, we introduce
directed acyclic network topologies (DANTs) as a way to
identify classes of graphical proofs.
These topologies provide a method of comparing the
performance of
different verification strategies on various proof structures.

The contributions of this work are as follows: (1) A layering
approach that decouples the dependencies of proof steps
within \ND proofs.
(2) Parallel verification algorithms that outperform
serial verification on non-straight topologies and scales 
with the number of hardware-based threads.
(3) Introduction of several classes of
graphical proofs, with an eye on empirical evaluation.
The relevant background which includes \ND and hypergraphical representations
is discussed in \S \ref{sec:Background}.
Within \S \ref{sec:Approach}, we discuss the
proof verification algorithm and several
optimizations. Then in \S \ref{sec:Results}, we discuss
directed acyclic network topologies to empirically
evaluate common proof structures.
In that section, we also discuss our performance and scaling results.
We then conclude by talking about
related work in \S \ref{sec:Related-Work}.

\section{Background}\label{sec:Background}
\subsection{Natural Deduction}

Natural deduction (\ND) is a logic calculus independently
proposed in \cite{gentzen1935untersuchungen, jaskowski1934rules}
in an effort to emulate human-level reasoning through
assumptions and chains of inference. There are many different styles of proof that fall under natural deduction, the three most common come from Gentzen \cite{gentzen1935untersuchungen}, Jaśkowski \cite{jaskowski1934rules}, and Fitch \cite{fitch1952symbolic}. However, we are mainly interested in a style that interoperates with a hypergraphical representation of 
natural deduction proofs. 

\begin{figure}[h!]
\begin{gather*}
\infer[\text{A}]{\{\phi\} \vdash \phi}{}
\quad\quad \infer[\land I]{\Gamma \cup \Sigma \vdash \phi \land \psi}{\Gamma \vdash \psi \quad \Sigma \vdash \phi}
\quad\quad \infer[\land E_l] {\Gamma \vdash \phi}{\Gamma \vdash \phi \land \psi} 
\quad\quad \infer[\land E_r] {\Gamma \vdash \psi}{\Gamma \vdash \phi \land \psi} \\
\quad\quad \infer[\lor I_l] {\Gamma \vdash \psi \lor \phi }{\Gamma \vdash \phi} 
\quad\quad \infer[\lor I_r] {\Gamma \vdash \phi \lor \psi}{\Gamma \vdash \phi} 
\quad\quad \infer[\lor E] {\Delta \cup \Gamma \cup \Sigma  \vdash \chi}{\Delta \vdash \psi \lor \phi \quad \Gamma \cup \{\psi\} \vdash \chi \quad \Sigma \cup \{\phi\} \vdash \chi} \\
\quad\quad \infer[\rightarrow I]{\Gamma \vdash \phi \rightarrow \psi}{\Gamma \cup \{\phi\} \vdash \psi}
\quad\quad \infer[\rightarrow E]{\Gamma \cup \Sigma \vdash \psi}{\Gamma \vdash \phi \quad \Sigma \vdash \phi \rightarrow \psi} \\
\quad\quad \infer[\lnot I]{\Gamma \cup \Sigma \vdash \lnot \phi}{\Gamma \cup \{\phi\} \vdash \psi \quad \Sigma \vdash \lnot \psi}
\quad\quad \infer[\lnot E]{\Gamma \cup \Sigma \vdash \phi}{\Gamma \cup \{\lnot \phi\} \vdash \psi \quad \Sigma \vdash \lnot \psi} \\
\quad\quad \infer[\leftrightarrow I]{\Gamma \cup \Sigma \vdash \phi \leftrightarrow \psi}{\Gamma \cup \{\phi\} \vdash \psi \quad \Sigma \cup \{\psi\} \vdash \phi} 
\quad\quad \infer[\leftrightarrow E_l]{\Gamma \cup \Sigma \vdash \psi}{\Gamma \vdash \phi \quad \Sigma \vdash \phi \leftrightarrow \psi} 
\quad\quad \infer[\leftrightarrow E_r]{\Gamma \cup \Sigma \vdash \phi}{\Gamma \vdash \psi \quad \Sigma \vdash \phi \leftrightarrow \psi}
\end{gather*}
\caption{Our inference schemata for natural deduction. Within each schema, $\Gamma, \Sigma, \Delta$ are sets of formulae, and $\phi, \psi, \chi$ are meta-logical variables which range over formulae. Note that our formulation of $\lnot I, \lnot E, \leftrightarrow I, \leftrightarrow E$ differs from those typically seen in other works such as \cite{Prawitz1965NaturalDA} but are equivalent.}
\label{fig:rules}
\end{figure}

\newpage

In this paper, we focus on propositional natural deduction.
Let $p$ denote an atomic proposition. The language of propositional logic may be
defined inductively using Backus Naur Form (BNF) as the following:
\begin{equation*}
    \phi ::= p | \neg \phi | (\phi \wedge \phi) | (\phi \vee \phi) | (\phi \rightarrow \phi) | (\phi \leftrightarrow \phi)
\end{equation*}
Our inference rules for \ND are summarized in Figure \ref{fig:rules} \footnote{While on the surface this formalization may appear similar to sequent natural deduction\cite{Pelletier2023}, we use ``$\vdash$'' in this formalism to mean syntactic entailment, with $\Gamma \vdash \phi$ being read as "Assuming $\Gamma$, then $\phi$" or "$\phi$ can be derived from $\Gamma$".}. This formalization is modeled after Bringsjord \cite{bringsjordintermediate}
and fully captures the notion of discharging of assumptions. It is also particularly well suited
to hypergraphical representation, which will be discussed in \S \ref{sec:Hypergraph}. 
These inference rules can be broadly split into two categories: (1) introduction rules ($\land I, \lor I_l, \lor I_r, \lnot I, \rightarrow I, \leftrightarrow I$), in which a logical connective is introduced into the conclusion, and (2) elimination rules ($\land E_l, \land E_r, \lor E, \lnot E, \rightarrow E, \leftrightarrow E_l, \leftrightarrow E_r$), in which a connective in a rule's premise is removed in its conclusion. The outlier here is the Assumption rule (A) which allows us to assert $\{\phi\} \vdash \phi$, or in English, "assuming $\phi$, $\phi$ follows".

For a natural deduction proof, a step is considered valid if the formula
is well-formed and it is justified by a rule of inference.
Valid formulae with
no assumptions are called tautologies. A proof is considered valid iff all of its steps are valid. An example of a valid proof can be seen in Figure \ref{fig:exampleProof1}.

\begin{figure}[!h]
\begin{equation*}
\infer[\lor E]{\{A \lor B, \lnot A\} \vdash B}{
\infer[\text{A}]{\{A \lor B\} \vdash A \lor B}{} \quad
\infer[\lnot E]{\{\lnot A, A\} \vdash B}{
\infer[\land E_l]{\{\lnot B, \lnot A\} \vdash \lnot A}{
\infer[\land I]{\{\lnot B, \lnot A\} \vdash \lnot A \land \lnot B}{
\infer[\text{A}]{\{\lnot A\} \vdash \lnot A}{} \quad
\infer[\text{A}]{\{\lnot B\} \vdash \lnot B}{}
}
} \quad
\infer[\text{A}]{\{A\} \vdash A}{}
} \quad
\infer[\text{A}]{\{B\} \vdash B}{}
}
\end{equation*}
\caption{An example of a valid proof of $B$ from $\{A \vee B, \neg A\}$. All steps are valid since at each step (1) all formulae are well formed and (2) the provided rule of inference can be legally applied at each stage given the current assumptions and premises.}
\label{fig:exampleProof1}
\end{figure}

\subsection{Hypergraphical Representation}\label{sec:Hypergraph}

\begin{figure}[ht]
    \centering
    \includegraphics[width=0.8\textwidth]{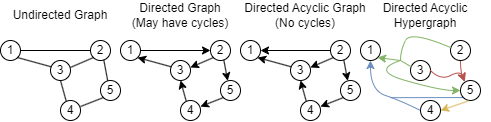}
    \caption{Visualizations of 4 different types of graphs, note that in the hypergraph, edges that share the same color are the same edge.}
    \label{fig:hypergraphs}
\end{figure}

A natural deduction proof can be represented diagrammatically
as a directed acyclic hypergraph
\cite{Voloshin2013IntroductionTG,arkoudas2009vivid}.
A directed acyclic hypergraph is a generalization of
a directed acyclic graph (DAG) which is a mathematical
structure $(V, E)$ where $V$ is a set of vertices and
$E: (V \times V)$ is a set of pairs of vertices.
Acyclic in this context means that for any vertex
$v$, it is not possible to find a path following the
directed edges that leads to $v$.
Directed acyclic hypergraphs extend this by allowing a set of vertices to be connected to a set of vertices by a single edge, thus a directed acyclic hypergraph is a structure $(V, E)$ where $V$ is a set of vertices and $E: \mathcal{P}(V) \times \mathcal{P}(V)$, where $\mathcal{P}(V)$ is the power-set of the set of vertices. Figure \ref{fig:hypergraphs} shows visualizations of the three graph formalisms described. 

To represent natural deduction proofs as hypergraphs, vertices represent premises and conclusions,
and edges represent inference rules. A \textit{proof graph} will be defined as a hypergraph of the form $(V, E)$ where $V$ is a set of statements in the form $\Gamma \vdash \phi$ and $E : \mathcal{P}(V) \times V$ is the set of directed hypergraphical edges representing inference rules applied between statements in the proof.\footnote{Note we use $\mathcal{P}(V) \times V$ rather than $\mathcal{P}(V) \times \mathcal{P}(V)$ since for all inference rules enumerated in Figure \ref{fig:rules} there is only one conclusion, thus each hyper-edge representing an inference rule will only ever have one outgoing connection.} This formalism underlies the representation of proofs in graphical interactive theorem provers such as \cite{Bringsjord2022,Oswald2022}. Figure \ref{fig:NDtoHypergraphs} provides examples of two natural deduction proofs that have been converted to hypergraphical form. 

Interactive theorem provers often do not force the user to keep track of proof state.
Therefore, it is important to note that we are interested in verifying proof graphs
where the assumptions on each node are yet to be known.
We are only given the $\phi$ on each node and must compute the $\Gamma$ based on how
the assumptions update within the inference rules.
If we had both $\Gamma$ and $\phi$, parallelization would be trivial, since we can then verify all the nodes in parallel.

\begin{figure}[ht]
\begin{minipage}{0.59\textwidth}
\begin{equation*}
\infer[\lnot I]{\{\lnot (p \lor q)\} \vdash \lnot p}{
\infer[\lor I_r]{\{p\} \vdash p \lor q}{
\infer[\text{A}]{\{p\} \vdash p}{}
} \quad
\infer[\text{A}]{\{\lnot (p \lor q)\} \vdash \lnot (p \lor q)}{}
}
\end{equation*}
\centering
\includegraphics[width=0.59\textwidth]{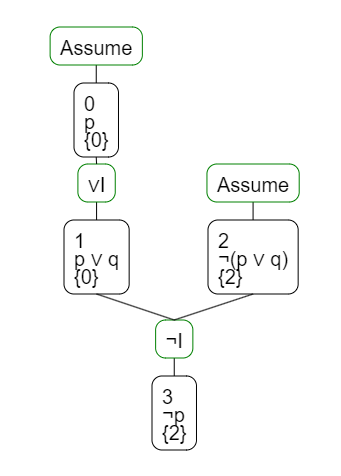}
\end{minipage}
\begin{minipage}{0.39\textwidth}
    \includegraphics[width=\textwidth]{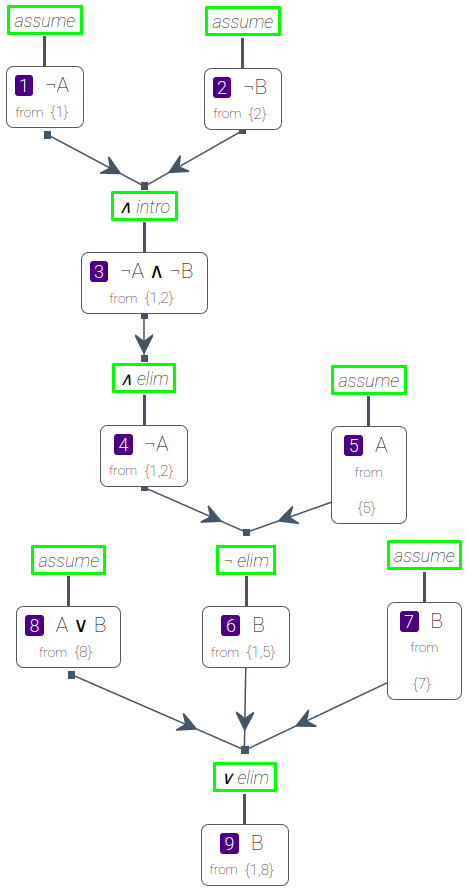}
\end{minipage}
\caption{(Top Left) A valid natural deduction proof that $\{\lnot (p \lor q)\} \vdash \lnot p$ (Bottom Left) The same proof represented as a hypergraphical proof graph structure in the Lazyslate interactive theorem prover\cite{Oswald2022}. (Right) Proof graph of the proof from Figure \ref{fig:exampleProof1} of $\{A \vee B, \neg A\} \vdash B$ in the HyperSlate \cite{Bringsjord2022} interactive theorem prover.}
\label{fig:NDtoHypergraphs}
\end{figure}

Proof graphs implicitly provide additional useful features for representing collections of natural deduction proofs, particularly those that are commonly added onto natural deduction via additional formalisms. First, proof graphs provide the ability to compactly represent proofs that contain reoccurring subproofs. This is because each hyper-node may have multiple outgoing hyper-edges, representing multiple inferences it is used in. While this feature is implicitly captured by proof graphs, natural deduction proofs require an additional formalism allowing named theorems that can be used in other proofs to provide this functionally. Another feature is that a single proof graph can have multiple conclusions or even contain multiple proofs where each proof is a disjoint hyper-subgraph. Without proof graphs, the ability to represent this feature would require a formalism in which a set of proofs can be be treated as a single proof.

\subsection{Multiprocessing}
In this work we use a shared memory model for
multiprocessing. This involves multiple threads
independently operating over the same shared memory space.
More specifically, we make heavy use of
single program multiple data (SPMD) style programs.
With a fixed number of threads instantiated, we attempt to distribute
work evenly across all the threads. 
Our work makes use of two important concepts from multi-processing: thread-safety and reductions (see \cite{Kumar1994} for more extensive coverage). \textit{Thread-safety} within a shared memory model is the notion that parallel algorithms are safe from errors due to concurrent writes and reads from the same piece of memory, known as a \textit{race condition}. 
\ignore{We have made all of our algorithms thread-safe.
}
The second concept is the notion of a parallel reduction. A \textit{parallel reduction} is an operator that takes a list of elements and computes a single element in parallel. A small example of this would be parallel sum over the list $(1,2,3,4)$: if we add 1 and 2 on one thread and 3 and 4 on another thread in parallel, and then sum their results, we can sum the entire list in 2 steps rather than the 3 steps it would take to sum the list in serial.   


\section{Approach}\label{sec:Approach}

We mentioned in \S \ref{sec:Background} that
natural deduction makes use of assumptions and chains
of inference in its proofs.
In the hypergraphical representation, to show that
a given node is valid, we need to show both that the
\emph{syntactic transformation} is valid and
that the \emph{assumption constraints} are met with
respect to the justification used inside that proof.
Let us consider the rule disjunction elimination
(more commonly known as proof by cases) from
Figure \ref{fig:rules} and its usage in the right proof
of Figure \ref{fig:NDtoHypergraphs}.
For example, we wish to show that
bottom node $B$ is valid.
For the syntactic check, we need to ensure that there
are three parent nodes, one of them is a disjunct, and
two of the other parent nodes match the current node.
Then for the assumption constraints, we need to make sure
that for one parent node $B$ it has $A$ in its assumption
set, and for the other parent node $B$ it has $B$ in its
assumption set.
Note from our discussion of the hypergraphical
representation in \S \ref{sec:Hypergraph} that the
underlying nodes do not contain the assumption information
themselves, but they are computed by the application of
each inference rule.  This creates the need of an additional
data structure that we call \texttt{assumptions} during
the verification process.
We obtain the justification
of a given step by calling \texttt{just} on
the node. This will return the justification that
is stored on the incoming edge of the node.

For a baseline comparison with the parallel verification algorithms,
we designed a single-threaded implementation that shares the same
algorithmic structure as the parallel ones minus the usage of shared
memory and threading.
The benchmark results are further discussed in \S \ref{sec:Results}.
Our algorithm works by maintaining a global map
of nodes to their set of assumptions. To ensure that
a node does not get verified before its parent, we
make use of a layering approach which induces a \textit{topological partial ordering} on the nodes. A topological partial ordering, also referred to as \textit{topological generations}, of a proof graph $G = (V, E)$ is a partial ordering $\preccurlyeq$ on the nodes $V$ where for each hyperedge $(\{v_{i0}, \cdots, v_{in}\}, v_o)$, all incoming nodes $\{v_{i0}, \cdots, v_{in}\}$ appear before the outgoing node $v_o$, that is $\forall (V_i, v_o) \in E : \forall v_{ij} \in V_i : v_{ij} \preccurlyeq v_o$.  
This layering approach for generating a topological partial ordering is similar to the well known serial topological sort algorithm \cite{Cormen2009} which generates a topological linear ordering of the nodes but lacks parallelizability.   
Figure \ref{fig:example} provides a colored example of
the nodes on each layer. More formally,
we define node $n$ to be on a layer $L(n)$ inductively as follows:

\begin{equation}
    \label{eqn:layer}
    L(n) =
        \begin{cases}
            0 & \text{if $n$ is an assumption} \\
            1 + \max_{m \in P(n)}(L(m)), & \text{otherwise}
        \end{cases}
\end{equation}
where $P(n)$ maps a node to its parents.

\subsection{Single-Threaded Implementation}

\begin{algorithm}[ht]
\caption{Single-Threaded Algorithm}
\label{alg:single}
\begin{algorithmic}[1]
\Procedure{verify}{ProofGraph p}
\State Initialize \texttt{assumptions} to be empty.
\State Create set of nodes on each layer using Equation \ref{eqn:layer} and store in layerMap.

\For {layerNodes in layerMap}
    \For {n in layerNodes}
        \State justification = \texttt{just}(n)
        \State ruleInfo = (m, \texttt{assumptions}(m)) $\forall$ m $\in$ parents(n)
        \If {not is\_valid(n, justification, ruleInfo)}
            \State return false
        \EndIf
        \State Update \texttt{assumptions}(n) using the justification and ruleInfo.
    \EndFor
\EndFor
\State return true
\EndProcedure
\end{algorithmic}
\end{algorithm}

The full single-threaded procedure is described in Algorithm
\ref{alg:single}.
For every layer, the procedure performs the following actions:
(1) Verify that the node is valid with respect to the justification claimed using the node's and its parents' syntactic information and the
parents' assumption information.
(2) If valid, update the \texttt{assumptions} data structure for the current node based on the parents' assumptions and justification.

In order to better highlight the progression of the algorithm, we will
walk through an example by looking at the
verification of Figure \ref{fig:NDtoHypergraphs} (Right).
Subscripts for the propositions help to distinguish between 
formulae by referencing the ID denoted inside the purple box in the figure.
In the beginning of the algorithm, the first layer only contains assumptions:
\begin{equation*}
    \texttt{currentLayer} = \{(\neg A)_1, (\neg B)_2, A_5, B_7, (A \vee B)_8\}
\end{equation*}
We then go through each node and verify them. Since
they are justified as assumptions, they are trivially valid.
The nodes then have their assumptions updated. The next layer only
contains $(\neg A \wedge \neg B)_3$. This validates and the node's assumptions are
updated to $\{(\neg A)_1, (\neg B)_2\}$. The third layer only contains $(\neg A)_4$.
This validates and the assumptions are propagated forward. On the fourth layer,
the node $B_6$ is justified by negation elimination. This validates and the node's assumptions are set to $\{(\neg A)_1, A_5\}$.
The fifth and final layer only contains the node $B_9$.
As the node is justified
by disjunctive elimination and is valid, we update the assumptions to $\{(\neg A)_1, (A \vee B)_8\}$. As we have gone through all the layers successfully, the whole hypergraph
is valid.

\begin{theorem}
    For all Proof Graphs $p$ the single-threaded \ttfamily \textsc{VERIFY$(p)$} \rmfamily is correct with respect to the validity of the \ND proof corresponding to $p$.  
\end{theorem}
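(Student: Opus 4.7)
The plan is to prove the theorem by induction on the layer number $k$, maintaining the following invariant: at the start of processing layer $k$, for every node $n$ with $L(n) < k$ the algorithm has not returned \texttt{false}, and \texttt{assumptions}$(n)$ equals exactly the $\Gamma$ such that $\Gamma \vdash \phi_n$ is the ND statement derivable at $n$ by the sub-proof rooted there. Correctness (in both directions) then follows by taking $k$ past the maximum layer.

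First I would establish a short preliminary lemma: for every hyperedge $(V_i, v_o) \in E$ and every $v_{ij} \in V_i$, we have $L(v_{ij}) < L(v_o)$. This is immediate from Equation~\ref{eqn:layer}, since $L(v_o) = 1 + \max_{m \in P(v_o)} L(m) \ge 1 + L(v_{ij})$. This ensures that when Algorithm~\ref{alg:single} processes a node $n$ on layer $k+1$, every parent $m \in P(n)$ has already been handled in a strictly earlier outer-loop iteration, so \texttt{assumptions}$(m)$ is defined and, by the induction hypothesis, correct.

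Next I would carry out the induction. For the base case ($k = 0$), the only nodes on layer $0$ are those justified by A; the \texttt{is\_valid} check accepts them, and the algorithm sets \texttt{assumptions}$(n) = \{\phi_n\}$, matching the conclusion $\{\phi_n\} \vdash \phi_n$ of schema A. For the inductive step, fix a node $n$ on layer $k+1$ with $r = \texttt{just}(n)$ and parents $P(n)$. The call \texttt{is\_valid}$(n, r, \texttt{ruleInfo})$ checks (i) that the formulae on $n$ and its parents match the premise/conclusion pattern of $r$ in Figure~\ref{fig:rules}, and (ii) that the assumption sets delivered via \texttt{ruleInfo}, which are correct by the induction hypothesis, satisfy any side conditions of $r$ (such as the presence of a discharged assumption in the right parent). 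If this check succeeds, the algorithm assembles \texttt{assumptions}$(n)$ by taking the appropriate union of parents' assumption sets, minus any discharged formula, exactly as dictated by $r$. Thus the invariant extends to layer $k+1$ in both the \texttt{true}-path (where this matches the ND conclusion $\Gamma \vdash \phi_n$ of $r$) and, contrapositively, the \texttt{false}-path (where no ND instance of $r$ could justify $n$ given the correct parental assumptions).

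The main obstacle will be a careful case analysis over the rules of Figure~\ref{fig:rules} to confirm that \texttt{is\_valid} plus the assumption update together implement each rule both soundly and completely. The subtle cases are those that discharge assumptions or merge disjoint assumption contexts, namely $\rightarrow I$, $\lnot I$, $\lnot E$, $\leftrightarrow I$, and $\lor E$: here the algorithm must verify membership of the discharged formula in the relevant parent's assumption set before removing it, and must correctly union the remaining contexts. Once this rule-by-rule verification is established, the induction closes and the biconditional characterizing correctness of \textsc{verify}$(p)$ follows.
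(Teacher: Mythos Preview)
Your proposal is correct and follows essentially the same approach as the paper: an induction on layers, using the key fact that every parent of a node lies in a strictly earlier layer so that parental assumption sets are already correctly computed when a node is processed. The paper's proof is considerably more terse (it states the base case and the layer-ordering observation and then appeals to induction without a rule-by-rule case analysis), so your version is a more detailed and rigorous rendering of the same argument rather than a different route.
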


\begin{proof} An algorithm is \textit{correct} if it is sound and complete (termination is trivial). We prove completeness and soundness follows from symmetry.
    A natural deduction proof is valid if all steps are valid. For a step to be valid it needs to pass the \emph{syntactic transformation} and the \emph{assumption constraints}.
    From these, only the assumption constraint check
    requires information from outside the node and its
    parents.
    The rules of natural deduction in
    Figure \ref{fig:rules} show how assumptions
    are computed based on the parent node's assumption
    sets.
    As such, parents of a node must be verified
    beforehand and have their assumptions computed.
    Nodes that are justified via assumptions mark the
    base case of this procedure as their assumption
    set only contains itself.
    Due to the definition of the layering in Equation \ref{eqn:layer} and its usage in Line 3, assumptions
    are in the first layer and the parents of a node
    must be in the previous layer.
    This means that the parents are verified and their assumptions are computed beforehand on lines 6-9.
    Inductively this means that all nodes are verified and have their assumptions computed successfully.
    Hence, the hypergraph proof itself is verified.
\end{proof}

\begin{figure}[ht]
    \centering
    \begin{minipage}{0.7\textwidth}
        \centering
        \includegraphics[width=\textwidth]{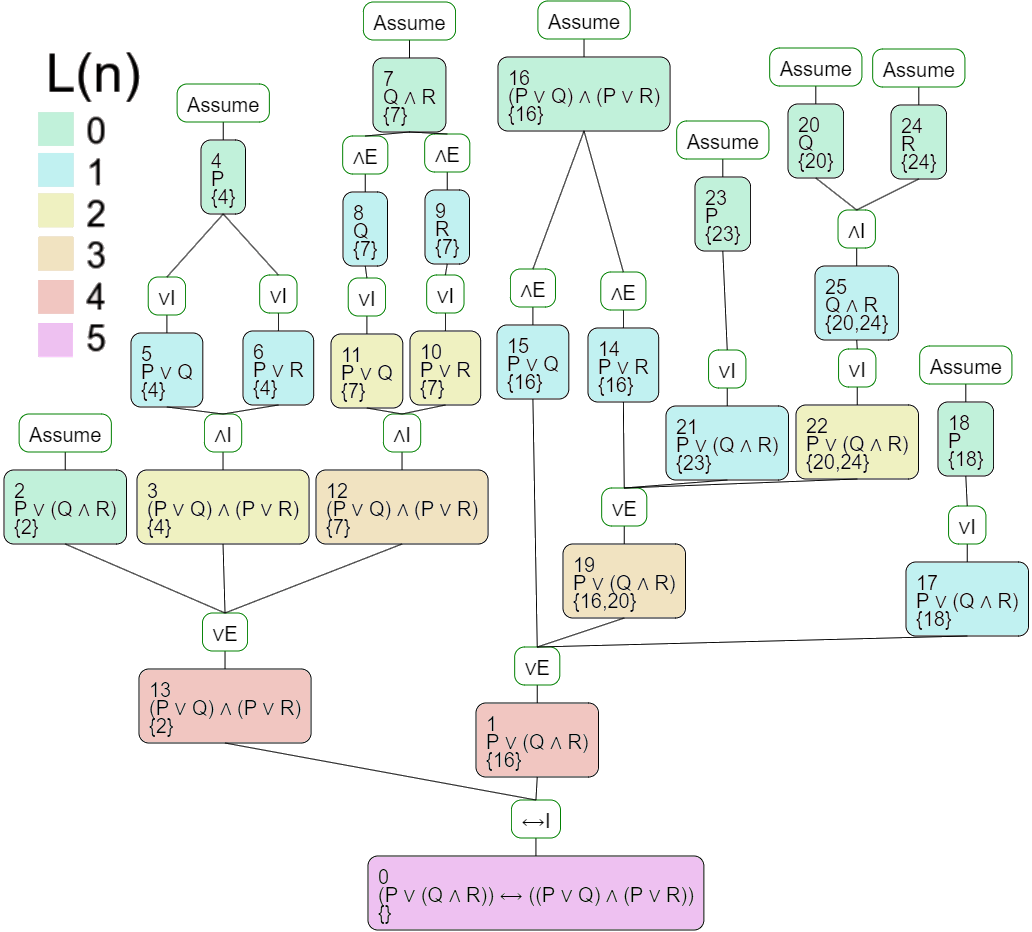}
        
    \end{minipage}\hfill
    \begin{minipage}{0.2\textwidth}
		\begin{tabular}{ll}
			\toprule
			Layer & Nodes\\
			\midrule
			0  & 2,4,7,16, \\ 
               & 23,20,24,18 \\
			1  & 5,6,8,9,15, \\
               & 14,21,25,17  \\
			2  & 3,11,10, 22 \\ 
			3  & 12,19 \\
			4  & 1, 13 \\
			5  & 0   \\
			\bottomrule
		\end{tabular}
    \end{minipage}
    \caption{(Left) A proof of logical or ($\lor$) distributivity over logical and ($\land$). (Right) The nodes of the left proof grouped by layer.}
    \label{fig:example}
\end{figure}

\subsection{Parallel Implementation (Non-optimized Parallel)}

Within a layer, each node only depends on nodes in
layers prior. This means that a node on some layer $n$
does not depend on any other node on layer $n$.
For our initial parallel implementation, fully described
in Algorithm \ref{alg:parallel}, we take advantage of this
and verify the validity of each node on the same
layer in parallel.
In our implementation, we used a shared memory approach.
To combat thread-safety issues we introduce a
vector for each given layer called \texttt{aIds}.
This vector lives in shared memory. Each entry
holds a set of node ids and its length is
the number of nodes in the current layer.
After computing the layers, each node gets evenly distributed
to the available threads.
Each individual thread would then verify the nodes it
was assigned and update the assumptions of the appropriate
index within \texttt{aIds}.
This is thread-safe because each entry only gets written to by the
thread that its corresponding node id has been assigned to.
The parallel portion of the algorithm additionally
performs an \texttt{AND}
reduction on the verification result of each node.
This means that if any result of the individual verifications
is false, then the entire verification result is false.
After the parallel portion is finished, the global
\texttt{assumption} map gets updated using \texttt{aIds}.

Let us illustrate the distribution of nodes with
the example from Figure \ref{fig:example}.
In the graphic, the number at the top of each
node represents its identifier. For brevity,
we will use those numbers when referring to
the nodes.
As the zeroth layer only contains assumptions that
are trivially verifiable, we will start our discussion
from layer one.
In this layer, we have the following nodes:
\begin{equation*}
    L(1) = \{5, 6, 8, 9, 15, 14, 21, 25, 17\}
\end{equation*}
For the sake of example, let us say we have
three threads.
Then thread 0 would be assigned 
$\{5, 6, 8\}$, thread 1 would be assigned
$\{9, 15, 14\}$, and thread 2 would be
assigned $\{21, 25, 17\}$.
Let us focus on thread 0.
Each of the three nodes verifies and the following
assignments are made to \texttt{aIds} based on
the justification used:
\begin{align*}
    aIds(5) &= \{assumptions(parent(5))\} \\
    aIds(6) &= \{assumptions(parent(6))\} \\
    aIds(8) &= \{assumptions(parent(8))\}
\end{align*}
After the parallel portion, the data within
\texttt{aIds} are copied into \texttt{assumptions}
as a way to ensure thread-safety.
On the next layer, we have the following
nodes: $L(2) = \{3, 11, 10, 22\}$.
This then gets distributed with
thread 0 getting $\{3, 11\}$, 
thread 1 obtaining $\{10\}$,
and thread 2 obtaining $\{22\}$.
Focusing on the first thread again, the
nodes verify and the following assignments
are made to \texttt{aIds}:
\begin{align*}
    aIds(3) &= \{assumptions(5), assumptions(6)\} \\
    aIds(11) &= \{assumptions(8)\}
\end{align*}
Notice that each of the items in those sets contains
assumptions that were computed in the previous layer.
They were originally assigned within \texttt{aIds} but 
then copied to \texttt{assumptions}.
The results of each individual node verification
were stored in \texttt{NodeValid} which is then
AND-reduced into the variable \texttt{LayerValid}.
Hence after the end of the parallel portion,
the variable \texttt{LayerValid} would be true
unless one of the nodes in the parallel portion
failed to verify. 

\begin{algorithm}[!hpt]
\caption{Multi-Threaded Algorithm}
\label{alg:parallel}
\begin{algorithmic}[1]
\Procedure{verify}{ProofGraph p}
\State Initialize \texttt{assumptions} to be empty.
\State Create set of nodes on each layer using Equation \ref{eqn:layer} and store in layerMap.

\For {layerNodes in layerMap}
    \State nl = length(layerNodes)
    \State \texttt{aIds} = sharedVector(length=nl)
    \State LayerValid = true
    \For {n in layerNodes in \emph{parallel}}
        \State justification = \texttt{just}(n)
        \State ruleInfo = (m, \texttt{assumptions}(m)) $\forall$ m $\in$ parents(n)
        \State NodeValid = is\_valid(n, justification, ruleInfo)
        \If {NodeValid}
            \State Update aIds(n) using justification and ruleInfo
        \EndIf
        \State AND\_reduce(LayerValid, NodeValid)
    \EndFor
    \If {not LayerValid}
        \State return false
    \EndIf
    \State Update assumptions(n) using aIds
\EndFor
\State return true
\EndProcedure
\end{algorithmic}
\end{algorithm}

\subsection{Multi-Threaded Static Load Balancing Optimization (Load Balancing)}

Notice when going over Figure \ref{fig:example} in the last
example that the distribution of nodes in the second layer was uneven.
The assignment had one thread verifying two while the others
verifying only a single one.
In fact, we can speak to this more generally.
Let $T$ represent the number of threads available and $l_i$ be the
number of nodes in layer $i$.
Let us assume that the verification of one node takes two units of work total:
one to verify the syntax and one to verify the assumptions.
Let $m = l_i \bmod T$. Then, if $m \neq 0$ there are $(T - m)$ threads that are doing
one less unit of work. 

\begin{algorithm}[ht]
\caption{Multi-Threaded Load Balance Algorithm}
\label{alg:load-balance}
\begin{algorithmic}[1]
\Procedure{verify}{ProofGraph p}
\State Initialize \texttt{assumptions} to be empty.
\State Create set of nodes on each layer using Equation \ref{eqn:layer} and store in layerMap.
\State AllNodes = Flatten(layers)

\State numSyntaxVerified = 0
\For {layerNodes in layerMap}
    \State nl = length(layerNodes)
    \State \texttt{aIds} = sharedVector(length=nl)
    \State LayerValid = true
    \State layerSyntaxVerified = 0
    \For {n in layerNodes in \emph{parallel}}
        \State justification = \texttt{just}(n)
        \State ruleInfo = (m, assumptions(m)) $\forall$ m $\in$ parents(n)
        \State NodeValid = is\_valid(n, ruleInfo)
        \If {NodeValid}
            \State Update aIds(n) using justification and ruleInfo
        \EndIf
        \State threadIterSyntaxVerified = 1
        \If {thread verifying less nodes}
            \State extraN = AllNodes[numSyntaxVerified + threadId]
            \State NodeValid = NodeValid and syntaxVerify(extraN, parents(extraN))
            \State threadIterSyntaxVerified = 2
        \EndIf
        \State SUM\_reduce(layerSyntaxVerified, threadIterSyntaxVerified)
        \State AND\_reduce(LayerValid, NodeValid)
    \EndFor
    \If {not LayerValid}
        \State return false
    \EndIf
    \State Update assumptions(n) using aIds
    \State numSyntaxVerified = numSyntaxVerified + layerSyntaxVerified
\EndFor
\State return true
\EndProcedure
\end{algorithmic}
\end{algorithm}

It is with this consideration that we look at static load balancing, presented in Algorithm \ref{alg:load-balance}.
For the threads with one less unit of work,
they take a node from a future layer and syntax verify them.
This approach is valid because syntax verification only requires
the current node and its parents' formulae which are stored
in the proof graph and does not require additional
information from the prior layers such as assumption sets. 
In order to ensure that the nodes that are syntax verified
by the remaining $(T - m)$ threads are distinct, we make use of
another reduction variable \texttt{numSyntaxVerified}. 
Each thread would be assigned
the node that's the sum of that variable and its thread id.
Do note that this is different from dynamic load balancing as
the amount of work is evenly distributed and does not take into
account during runtime some threads finishing before others.

Let us turn to our example from
Figure \ref{fig:example} again.
Recall that the distribution of work
at layer two was the following:
thread 0 maps to $\{3, 11\}$, 
thread 1 has $\{10\}$,
and thread 2 has $\{22\}$.
We can then squeeze in syntax
verification checks in thread 1 and thread 2.
Then, the new allocation becomes:
thread 0 maps to $\{3, 11\}$,
thread 1 maps to $\{10, x_s\}$,
and thread 2 maps to $\{22, y_s\}$.
The subscript denotes how we are only
performing a syntax verification at that
step. Recall that we can not perform
full verification of nodes in future layers
because we do not know if there's a node on the
current layer that its assumptions depends on.
The question then is: how are $x_s$ and $y_s$
calculated? 
As noted before, this is where we keep track of the
total number of nodes that we have syntax verified
already. If we have a flat vector of all nodes
that are partially ordered by their layer number,
then for thread $i$ we can have it syntax verify
$\texttt{numSyntaxVerified} + i$ element of that flat vector.
A sum reduction then keeps track of the total number of syntax verifications
performed on a given layer which is later used to update \texttt{numSyntaxVerified}.

\subsection{Parallel Distribution of Syntax Checks (Syntax-First)}

In the last section we discussed that syntax verification can occur
beyond the current layer being considered. In fact, syntax verification
can occur outside of the layering structure in general which this optimization
considers. In this approach we first perform the syntax verification in parallel
over all nodes before iterating over the layers. This approach is outlined in Algorithm \ref{alg:syntax-first}.
This not only has the benefit of lowering the time to find a syntactic error,
but also more evenly distributes the syntax verification over all threads.

For our example in Figure \ref{fig:example}, the proof graph contains
node ids $0$ through $25$.
If we have three threads, then thread $0$ would be assigned
nodes $0$ through $8$, thread $1$ would be assigned nodes $9$ through $16$,
and thread $2$ would be assigned nodes $17$ through $25$.
Each thread would then loop over their assigned nodes and syntax verify
them. When the three threads finish, if any of their nodes failed to syntax verify,
then the algorithm would end and the proof graph verification would fail.
In this example, however, the nodes pass the syntax verification.
The rest of the algorithm closely follows Algorithm \ref{alg:parallel}
where instead of performing a full verification, we only verify that the
assumption constraints hold.

\begin{algorithm}[ht]
\caption{Multi-Threaded Syntax Check First Algorithm}
\label{alg:syntax-first}
\begin{algorithmic}[1]
\Procedure{verify}{ProofGraph p}
\State syntaxValid = True
\For {n in p.nodes in \emph{parallel}}
    \State valid = verifySyntax(n, parents(n))
    \State AND_reduce(syntaxValid, valid)
\EndFor

\If {not syntaxValid}
    \State return false
\EndIf

\State Initialize \texttt{assumptions} to be empty.
\State Create set of nodes on each layer using Equation \ref{eqn:layer} and store in layerMap.

\For {layerNodes in layerMap}
    \State nl = length(layerNodes)
    \State \texttt{aIds} = sharedVector(length=nl)
    \State LayerValid = true
    
    \For {n in layerNodes in \emph{parallel}}
        \State justification = just(n)
        \State ruleInfo = (m, assumptions(m)) $\forall$ m $\in$ parents(n)
        \State NodeValid = verifyAssumptions(n, justification, ruleInfo)
        \If {NodeValid}
            \State Update aIds(n) using justification and ruleInfo
        \EndIf
        \State AND\_reduce(LayerValid, NodeValid)
    \EndFor
    \If {not LayerValid}
        \State return false
    \EndIf
    \State Update assumptions(n) using aids
\EndFor
\State return true
\EndProcedure
\end{algorithmic}
\end{algorithm}

\section{Methodology and Results}\label{sec:Results}
To discuss the performance of our algorithms, we provide
an empirical investigation.
To this end, we look toward a comparison
of the number of seconds needed to verify various proof
structures using the algorithms described before.
Inspired by the topologies used in computer
network design \cite{bicsi2002network}, 
we introduce a directed variant that we call
directed acyclic network topologies or DANTs.
These DANTs represent different classes of possible
proofs with which we perform benchmarks over.

\subsection{Directed Acyclic Network Topologies (DANTs)}
\label{sec:DANT}

\begin{figure}
    \begin{minipage}{.5\textwidth}
        \centering
        \includegraphics[width=7cm]{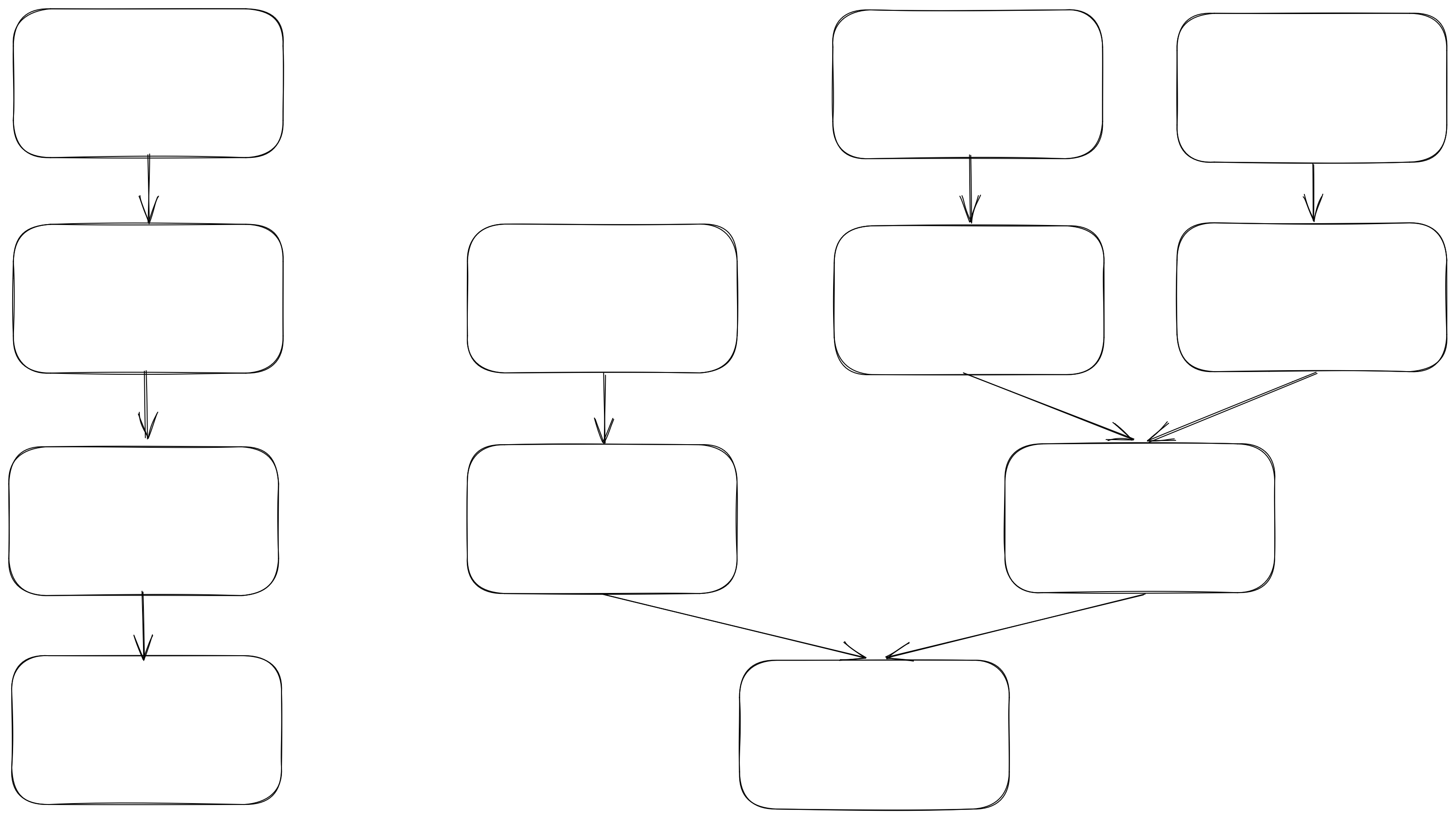}
    \end{minipage}
    \begin{minipage}{.5\textwidth}
        \centering
        \includegraphics[width=7cm]{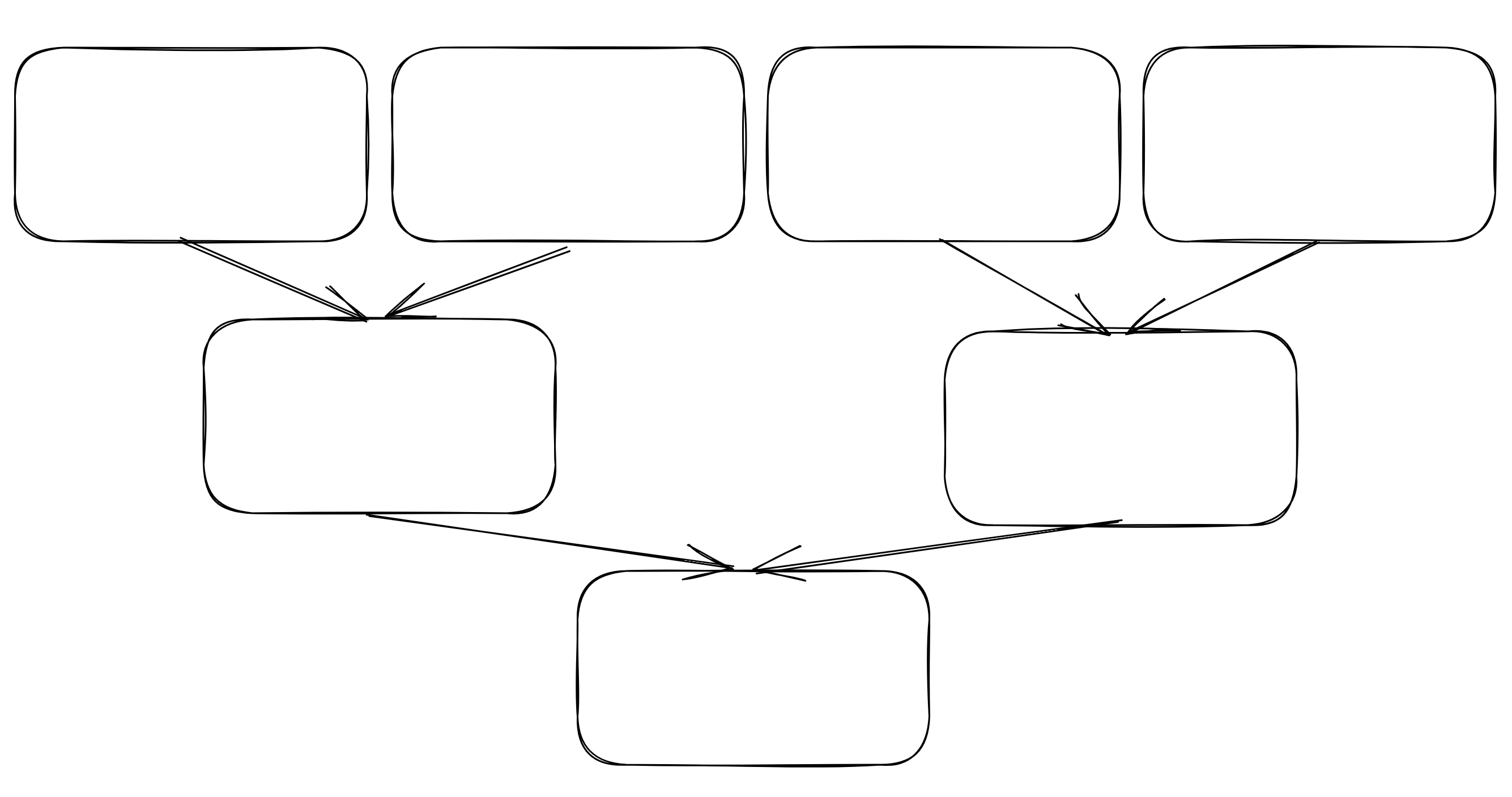}
    \end{minipage}
    
    \caption{Straight, Parallel-Branch, then Tree Topologies}
    \label{fig:topologies}
\end{figure}

We have identified three distinct classes of DANTs 
which we include in our benchmark of proof graphs for
analysis. The pictorial depiction is shown in
Figure \ref{fig:topologies} and  is generated as so:
    \textbf{Straight Line $(n)$}: Parameterized by the total
    number of nodes $n$, this topology only contains one assumption
    at the top and then each future inference is a disjunction
    introduction. This enforces a straight linear proof with no
    branching. We do not expect any speedup in the parallel algorithm
    in this case as each layer only has one node.  
    \textbf{Parallel Branches $(b, n)$}: This topology emulates multiple
    lines of independent reasoning before combining towards the end.
    It starts off with $b$ separate assumptions and then performs a
    disjunctive introduction on each assumption $n$ times before
    iteratively applying conjunctive introduction to each branch
    until there is one remaining. 
    We expect the number of branches will correlate with the scalability of verification. It should be noted that this topology is isomorphic to a straight topology of length $n$ when $b=1$ and emulates a tree like topology of height $b$ when $b > 1$.
    \textbf{Tree $(h)$}: In this topology we generate $2^h$ assumptions and iteratively
    apply conjunction introduction $h$ times until we reach a single node.
    This creates a balanced binary tree.
    We hypothesize the greatest amount of speedup from this topology.

\subsection{Empirical Analysis}

We perform two classes of experiments: (1) a parallel strong scaling study in which the proof to be verified is held constant while the number of processors increases; (2) a problem size scaling study in which we hold the number of threads constant and look at how each method performs as the problem gets harder.

\subsubsection{Implementation Details}
Our benchmarks were performed on one node of the IBM DCS supercomputer, AiMOS, at Rensselaer Polytechnic Institute. Our code is available at \url{https://github.com/RAIRLab/Parallel-Verifier}
\footnote{For reproducibility of our results, please see the following link for the specific commit
the results of this paper is based on: \url{https://github.com/RAIRLab/Parallel-Verifier/tree/a661abbe5bf038a3fa8645b8af532b0a60daebe5}}. 
The code is implemented in C++11 and makes use of the standard C++ library data structures.
For the multiprocessing component, we use
the OpenMP library \cite{chandra2001parallel}. OpenMP operates over software threads which are assigned to CPUs.
We ensure during our scaling study that the system is not \textit{oversubscribed}, meaning that there is just a single thread used per CPU. AiMOS provides us a single node on which ten physical cores are available; however one is reserved for the operating system and IO, therefore nine are used for our experiments.
For our benchmarks we do not include the time it took for initialization, file parsing, or proof parsing; 
we only measure the time taken to verify the proof.
For this, we record the clock-cycles before and after the execution of the verification algorithm and use their difference to compute the total cycles.
We then compute the number of seconds taken by each method through dividing the number of total cycles by the base clock rate of 512MHz.

\subsubsection{Strong Scaling}\label{sec:strongscaling}
For our strong scaling study we vary the number of threads used while holding the DANT instance constant. 
For the straight topology, we consider a length $n$ of 150.
For the branch topology, we consider $b = 150$ branches each with a length of $n = 100$.
Lastly, for the tree topology, we consider $h = 16$ conjunction introductions for a
total number of $2^{16}$ vertices.
Results can be seen on the left of Figure \ref{fig:results}. The strong scaling results show a clear benefit to our parallelized verification approach. In the case of the straight topology the serial algorithm vastly outperforms the parallel algorithms, which is expected as in this topology
there is only one node per layer.
There are clear overheads to parallelization,
such as waiting for all threads to finish, 
that make timing differences visible as the number of threads increases for the straight topology.
For the branch topology with 150 branches we see that our parallel methods scale well, particularly load balancing which beats out syntax-first and non-optimized parallel methods. We hypothesize this is due to the number of remaining nodes on each layer remaining constant which allows for a good balance of syntax checking vs assumption updating. The parallel methods perform quite well on the tree topology significantly beating out the serial method, with non-optimized parallel and syntax-first methods beating out load balancing likely due to the overhead costs.

\subsubsection{Problem Scaling}
For our problem scaling study we hold the number of threads constant (at AiMOS' maximum value of nine) and vary the problem size. For the straight topology, we consider a chain of disjunctive introductions of lengths ($n$) 100, 150, 200, 250, 300, 350, and 400.  For the branch topology, we consider a fixed branch length of $n=100$ and vary the number of branches ($b$) at 30, 50, 70, 90, 110, 130, and 150. Lastly for the tree topology, we create binary trees of heights ($h$) 8, 10, 12, 14, 16, 18, and 20.
Results can be seen on the right side of Figure \ref{fig:results}. We hypothesize the straight topology scaling is not linear due to overheads such as the formulae length increasing as the problem size increases. We see that for all problem sizes on the straight topology, the serial implementation outperforms the parallel implementation. This aligns with the observation in the strong scaling study that the parallel methods have overheads and the fact that for all parallel methods, only one node is on each layer, preventing the majority of the threads from doing any work. In the branch topology, the results show that as the number of branches increases, the effectiveness of parallel methods increases.
This is particularly shown in load balancing, due to the reasons discussed in \S \ref{sec:strongscaling}.
For the tree topology, there is an exponential increase in the time taken as the problem grows, largely due to the fact that the number of nodes to verify increases exponentially ($2^n$) as the problem size increases. We see that as the problem size grows, the performance of our parallel methods over the serial method increases substantially. 

\begin{figure}
    \centering
    \includegraphics[width=0.49\textwidth]{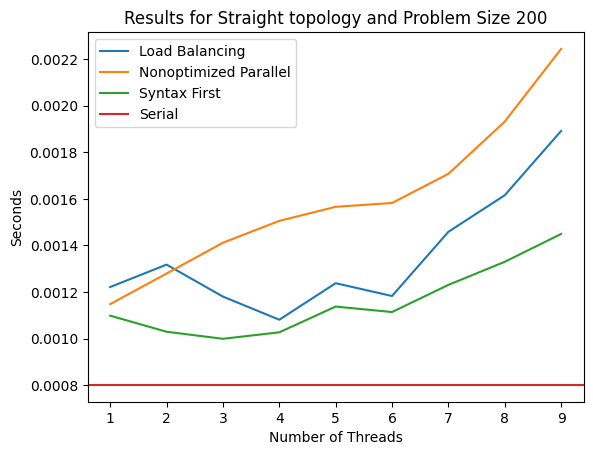}
    \includegraphics[width=0.49\textwidth]{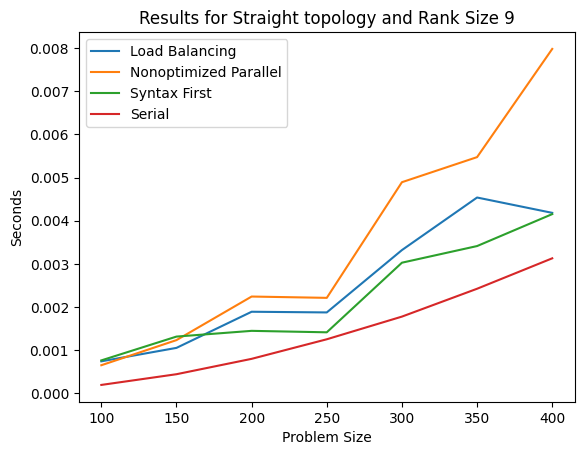} \\
    \includegraphics[width=0.49\textwidth]{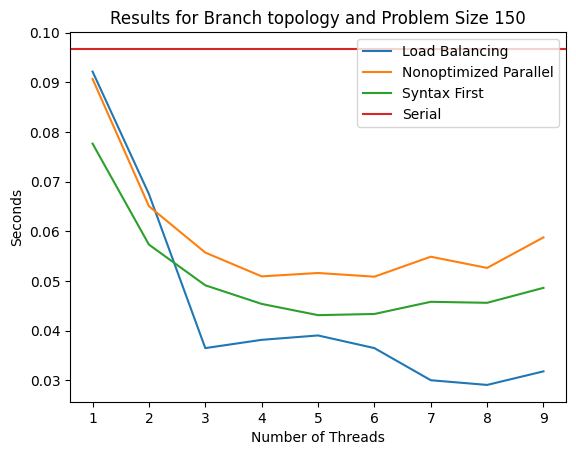}
    \includegraphics[width=0.49\textwidth]{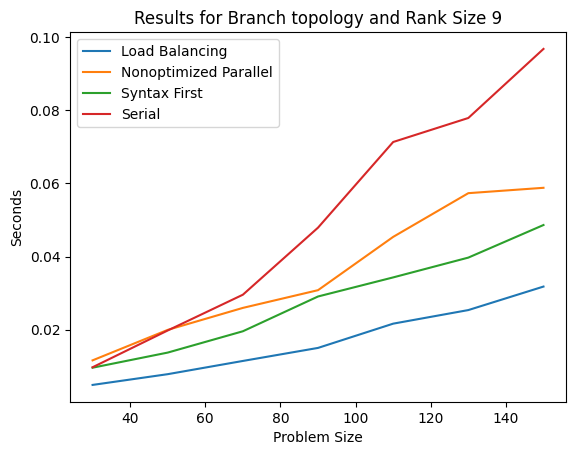}\\
    \includegraphics[width=0.49\textwidth]{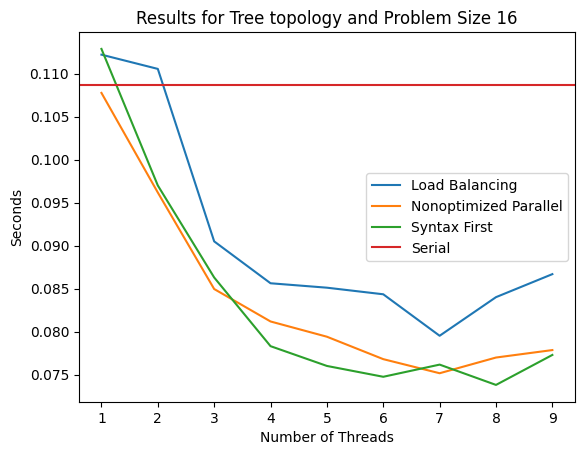}
    \includegraphics[width=0.49\textwidth]{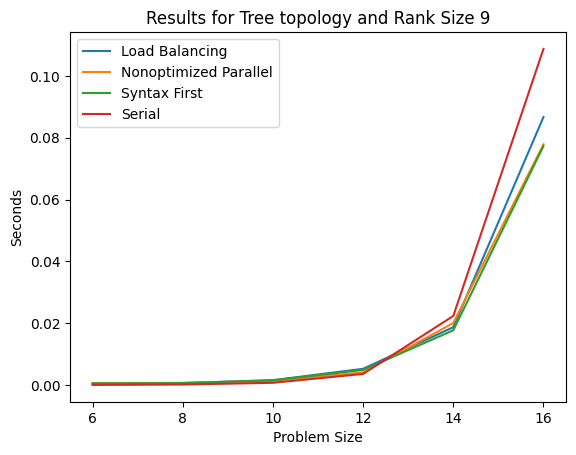}
    \caption{(Left) Strong scaling study results (Right) Problem scaling study results.}
    \label{fig:results}
\end{figure}

\section{Related Work}\label{sec:Related-Work}
Past work has investigated parallel or concurrent
verification of other logical calculi.
For example the developers of Isabelle
\cite{wenzel2008isabelle}, a proof assistant
with support for first-order logic,
higher-order logic, and Zermelo-Fraenkel set theory,
used concurrent programming for the efficient
verification of LCF-style proofs \cite{scott1993type}.
In their work, during the verification of a proof,
if a reference to another proof is made, and
that proof has yet to be verified, then 
a promise is deferred.
At the end of verification, all promises are resolved
and localized errors are then shown to the user
\cite{parallelisabelle, wenzel2013read}.
This work operates over entire proofs
while we focus on parallelizing steps
within a proof.
In \cite{Wenzel}, Wenzel introduces what he calls
granularities of concurrency within verifying
a single proof. The levels
include concurrent verification of theories,
concurrent verification of commands, and
concurrent verification of subproofs.
In terms of our work, we do not consider
extra background theories, commands correspond
to our proof steps, and we do not consider
subproofs in our work. As discussed
in \S \ref{sec:Background}, there is a definition of
a subproof in our natural deduction hypergraphs.
However, it's not something specified by the creator
of the proof and there
can be $n$ different subproofs for a proof with $n$ nodes.

F{\"a}rber looked at concurrent verification of commands in his work
parallelizing proof checking inside the lambda-Pi calculus
modulo rewriting \cite{Frber2022SafeFC}. 
In this work, he breaks up a command into four tasks:
parsing, sharing, type inference, and type checking. 
Similarly, our work breaks up our inference rules into two steps:
syntactic checks and assumption checks. We additionally look
at the parallel verification of sets of steps or commands, as opposed
to only looking at the concurrency within each command.

\section{Conclusion}
\ignore{We hope that the introduction of these DANTs inspires discussion
on a null model of proof graphs or a benchmark of graphical style
proofs which can be used in the performance analysis of verification
algorithms.}

In this work, we presented a layering based algorithm that decouples
the underlying semantic dependencies of proof steps in
natural deduction.
Through this, we introduced a suite of new algorithms
which use layers to parallelize verification
of hypergraphical natural deduction proofs.
Directed acyclic network topologies (DANTs) were introduced
as a benchmark for hypergraphical proofs and we have
shown in our analysis that the parallel algorithms
perform better than their serial counterpart
on non-straight DANT instances.
These parallel algorithms were additionally shown
to scale through both
the strong scaling and problem scaling studies. 
This work has applications in formal verification, specifically in
proof assistants.

Our future work falls into four categories: theoretical results, empirical results, logic extensions, algorithmic optimizations: (1) For theoretical results, we would benefit from analysis with respect to Amdahl's Law \cite{Hill2008}
to calculate the overall speedup with respect to different parallelizable tasks in each of the algorithms.
(2) For further empirical results, we can test randomized proof topologies or craft a dataset of
common natural deduction proofs. 
(3) We wish to extend our verifier to handle different types of logics,
    specifically first-order and modal logics. First-order logics are used
    heavily within proof assistants, and require the ability to represent and reason over formulae at the term level, including the need for checking if variables are free or bound in inference rules. We conjecture that despite these extra requirements, our layer based parallel approaches would still work on first-order proof graphs. In order to handle modal logics such as \textbf{K5}, we would have to adapt the algorithm to include additional
    bookkeeping required for several of the inference rules.
(4) We would like to explore approaches to scale beyond a single computer. This involves exploring message-passing parallelism which is often used in distributed computing. 

\paragraph{Acknowledgements}
This paper was supported in part by a fellowship award under contract FA9550-21-F-0003 through the National Defense Science and Engineering Graduate (NDSEG) Fellowship Program, sponsored by the Air Force Research Laboratory (AFRL), the Office of Naval Research (ONR) and the Army Research Office (ARO).

\bibliographystyle{eptcs}
\bibliography{references}

\end{document}